\newtheorem{linTheo}{Proposition}
\begin{document}

\title{GROVE: A Cost-Efficient Green Radio over Ethernet Architecture for Next Generation Radio Access Networks \\
\thanks{This work is supported by the Turkish Directorate of Strategy and Budget under the TAM Project number 2007K12-873.}
}

\author{\IEEEauthorblockN{Turgay Pamuklu\IEEEauthorrefmark{1} and
Cem Ersoy\IEEEauthorrefmark{2}}\\
\IEEEauthorblockA{NETLAB, Department of Computer Engineering,\\
Bogazici University,
Istanbul, Turkey\\
Email: \IEEEauthorrefmark{1}turgay.pamuklu@boun.edu.tr,
\IEEEauthorrefmark{2}ersoy@boun.edu.tr}}

\maketitle

\makeatletter
\def\ps@IEEEtitlepagestyle{%
  \def\@oddfoot{\mycopyrightnotice}%
  \def\@oddhead{\hbox{}\@IEEEheaderstyle\leftmark\hfil\thepage}\relax
  \def\@evenhead{\@IEEEheaderstyle\thepage\hfil\leftmark\hbox{}}\relax
  \def\@evenfoot{}%
}
\def\mycopyrightnotice{%
  \begin{minipage}{\textwidth}
  \centering \scriptsize
  \copyright~20XX IEEE.  Personal use of this material is permitted.  Permission from IEEE must be obtained for all other uses, in any current or future media, including reprinting/republishing this material for advertising or promotional purposes, creating new collective works, for resale or redistribution to servers or lists, or reuse of any copyrighted component of this work in other works. ACCEPTED ARTICLE DOI: 10.1109/TGCN.2020.3042121
  \end{minipage}
}
\makeatother

\begin{abstract}
Centralized/Cloud Radio Access Network (C-RAN) comes into prominence to reduce the rising energy consumptions and maintenance difficulties of the next-generation networks. However, C-RAN has strict delay requirements, and it needs large fronthaul bandwidth. Function splitting and Radio over Ethernet are two promising approaches to reduce these drawbacks of the C-RAN architecture. Meanwhile, the usage of renewable energy sources (RESs) in a C-RAN boosts the energy-efficiency potential of this network. In this paper, we propose a novel model, which is called Green Radio OVer Ethernet (GROVE), that merges these three approaches to maximize the benefits of C-RAN while maintaining the economic feasibility of this architecture. We briefly explain this model and formulate an operational expenditure minimization problem by considering the several restrictions due to the network design and the service provisioning. Then we linearize the quadratic routing decision constraints in the problem to solve it with a mixed-integer linear programming (MILP) solver. Results show that it is cost-effective to choose routing, function splitting, and RES decisions together. Our solution surpasses classical disjoint approaches for all studied cases. Besides, we provide a network scalability analysis to determine the MILP solver's limits for larger network topologies.
\end{abstract}
\begin{IEEEkeywords}
Cost Optimization in Wireless Networks, Energy Harvesting, Solar Energy, Energy Efficiency, Centralized/Cloud Radio Access Networks, Function Splitting, Radio over Ethernet
\end{IEEEkeywords}

\section{Introduction}
\par A Mobile network operator (MNO) requires to serve users with several new small cells to provide larger bit rates in the next-generation mobile communication networks. The current long term evolution (LTE), in which a base station combines baseband (BBU) and radio frequency (RF) processing units together, is not a cost-effective architecture for an ultra-dense small cell network configuration \cite{Alhumaima2018}. As an alternative, Centralized/Cloud Radio Access Network (C-RAN) architecture has several benefits, such as energy-efficiency and ease of maintenance due to the centralization of the BBUs in a central unit (CU). Thus, an MNO reduces its operational expenditure (OpEx), which increases annually as a result of the increasing amount of base stations \cite{Garcia-Saavedra2018a}. The benefits of C-RAN are not limited by its scalability and multiplexing gain capability. This architecture is also a promising approach to increase the spectral efficiency by simplifying the coordinated multipoint (CoMP) technique \cite{Shehata2018}. 
\par Despite the advantages of C-RAN, increasing end-to-end delay and high-bandwidth requirement in optical fronthaul links between the CU and distributed units (DUs) make this architecture infeasible and uneconomical \cite{Dotsch2013}. Splitting the BBU functions between the CU and DUs is a promising approach to reduce these drawbacks in a C-RAN. Deciding the point where to break the function chain means how many functions are processed at the high powered CU instead  of leaving them at a local DU \cite{Larsen2019}. Besides, these split decisions may be dynamically adjusted by the MNO according to the network variations such as the daily data demand profile of the users. Therefore, this approach improves not only the overall network performance but also the quality of service (QoS) of the users in the network.  Briefly, this method is a trade-off between the energy-efficiency and reducing the delay and bandwidth requirements by choosing the weight of centralization \cite{SmallCellForum2016}.
\par Another cost of the C-RAN architecture is the capital expenditure (CapEx) which originates from the newly constructed optical fiber links between the CU and DUs \cite{Zhang2020}. The IEEE 1914 Next Generation Fronthaul Interface (xhaul) (NGFI) Working Group propose Radio over Ethernet (RoE) to reduce these costs \cite{Group}. Their standard document details this approach, in which the radio traffic between a DU and the CU is encapsulated in Ethernet frames on a multihop mesh network topology \cite{Communications2018}. This approach is more economical than the dedicated links approach by using the advantage of aggregating the traffic of different DUs in the same network lines. Moreover, this network topology may also be integrated with the backhaul network for additional cost-efficiency \cite{Gonzalez-diaz2019}. On the other hand, the RoE approach complicates the function splitting problem; thus, extra efforts and joint optimization methods are needed to provide cost-efficient solutions.
\begin{figure*}
\centering
\includegraphics[width=0.8\textwidth]{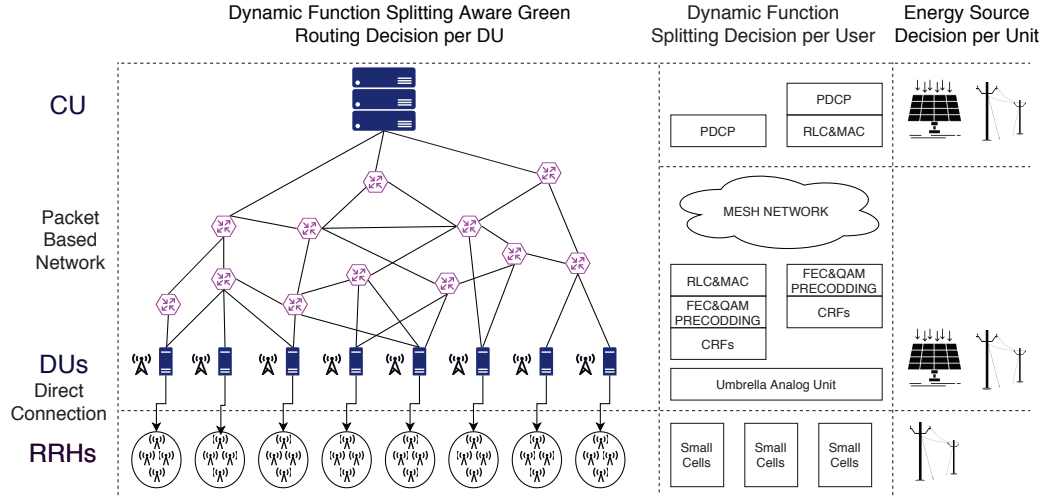}
\caption{\label{fig:arch} GROVE: Green Radio OVer Ethernet system architecture. Combining the three fundamental decisions of a next-generation wireless network (CU: central unit, DU: distributed unit, RRH: remote radio head, PDCP: packet data convergence protocol, RLC: radio link control, MAC: medium access control, FEC: forward error correction, QAM: quadrature amplitude modulation, CRF: cell-related functions).}
\end{figure*}
\par Using renewable energy sources (RESs) as an alternative to the on-grid energy in a C-RAN architecture is another promising approach to reduce the OpEx of an MNO. Besides, RESs have two critical drawbacks. First, renewable energy should be stored in a storage system such as lithium-ion batteries to use it efficiently. Nevertheless, these systems have limited storing capacity and increasing their capacity impact directly the CapEx. Thus, we need to opt for the solutions which promote the practical usage of the valuable storage systems \cite{Pamuklu2018}. Second, RESs are sporadic and unstable sources, and they provide highly unpredicted energy across time and space \cite{Chiang2018}. Therefore, these sources usually need to be supported by a reliable source like on-grid electricity.    
\par In this paper, we aim to reduce the OpEx of an MNO by combining these three new concepts in a C-RAN: splitting the BBU functions between a CU and several DUs, transferring the traffic between them with the RoE approach, and using RESs as an alternative energy source in these units. Figure~\ref{fig:arch} presents this architecture. To the best of our knowledge, there are no studies to model this problem and find a joint solution to reduce the OpEx. Our contributions can be summarized as follows:
\begin{enumerate}
\item We model the Green Radio OVer Ethernet (GROVE) concept. This novel concept is a promising approach to make a C-RAN architecture cost-effective for an MNO.
\item We formulate an optimization problem that aims to reduce the OpEx of this new model. We present that in this model, we have to jointly make decisions for the function splitting, dynamic routing decisions between the CU and DUs, and considering to use the RESs efficiently.
\item We linearize the quadratic constraints that proceed from choosing the function splitting and routing decisions concurrently. Therefore, the problem is transformed into a new form that can be solved with a MILP Solver.
\item We experiment with different traffic loads to show the performance of the solution for diverse city populations. Besides, we use real solar data and examine our solution for different seasons to see the impact of seasonal changes in four different geographical areas in the world, which have significantly different solar radiation distributions.
\item  We examine the limits of the MILP solver for larger network problems.
\end{enumerate}
\par The remainder of this paper is organized as follows. The related work is discussed in the second section. Then, we define the GROVE system model and its cost optimization problem in the third and fourth sections, respectively. In the fifth section, we present the results of the computational experiments, followed by the concluding remarks and future works in the last section.
\section{Related Work}
\subsection{Function Splitting Approaches for C-RAN}
\par One of the key performance metrics of a C-RAN architecture is the multiplexing gain that comes from centralizing the BBU functions in the CU. Thus, we can get rid of the unnecessary energy consumption of underutilized BBUs originate from low traffic loads. Checko et al. advance the multiplexing gain analysis by investigating it for function splitting approaches \cite{Checko2016}. They form their proposed architecture in two ways: an N-dimensional Markovian process model and a discrete event simulation model. Then, for each model, they provide the impact of different splitting decisions on the quantitative multiplexing gain results, in terms of energy and cost minimization. Wang et al. introduce a new approach to the function splitting concept, in which some of the functions are processed in an edge cloud (EC) as an alternative to the CU \cite{Wang2017}. In this architecture, the ECs may serve more than one RRHs to increase the QoS. They call this architecture "Hybrid C-RAN," and their objective is jointly minimizing the total energy consumption in the network and the bandwidth of the midhaul which provides the connections between the CU and ECs. In their next paper, they enhance their study by analyzing the effect of traffic load on their findings \cite{Alabbasi2018}.
\par Some studies prefer to model the function splitting problem as a graph, in which the BBU functions are represented as nodes, and the connections between these nodes are shown as weighted edges. Mharsi et al. choose the weight of the edges as the latency requirement \cite{Mharsi2018}. The start nodes of their graph are the antennas, and the end node is the CU. Thus, they perform splitting decisions for all data flows between the antennas and the core node in their studied network. They have a multi-objective function that jointly minimizes the sum of the end-to-end latencies and the total number of CPUs used in their proposed network. Meanwhile, Liu et al. choose more than one metric for the weight of the edges, which are the computational costs and the fronthaul link costs \cite{Liu2015}. Then they characterize the tradeoff between these two parameters while choosing the delay as a constraint in their problem formulation. 
\par Shehata et al. focus on static function splitting options. They investigate the effect of these options on reducing the energy consumption and the needed giga operations per second (GOPS) in the network \cite{Shehata2018}. They describe the difference between the local BBU architecture and the BBU pool in C-RAN. Also, they explain the layers that are processed in these BBUs and the impact of several split options between these layers. Their analytical model starts with User-Evolved-Node-B (eNB) assignment according to the highest received signal strength indicator at the user side. Next, they schedule the physical resource blocks (PRBs) among the users. Then, they provide a detailed energy consumption model of BBUs in a classical distributed-RAN and a C-RAN. Their experiments demonstrate the improvement of the system performance in each split option for different geographical type areas. Harutyunyan et al. suggest a virtual network embedding (VNE) approach in their papers for the purpose of finding the optimum place for splitting the BBU functions \cite{Harutyunyan2018}. In their first study, they formulate a problem in which they jointly minimize the interference and the fronthaul bandwidth. In their second study, they combine the problem of choosing the optimum places for the BBU functions with minimizing the number of used millimeter-wave wireless fronthaul links in this VNE model \cite{Harutyunyan2019}.
\subsection{Integrating Radio over Ethernet with C-RAN methods}
\par The Radio over Ethernet (RoE) approach reduces not only the operating costs but also the planning costs of an MNO. Thus, several recent studies implement this approach to the function splitting problem to improve the feasibility of their proposed models. Garcia-Saavedra et al. present a decision-making engine, called Wizhaul, that jointly choose the flow paths and the weight of the function centralization in a CU \cite{Garcia-Saavedra2018}. By using this engine, they provide solutions for both network planning and operating phases. In their recent paper, they also include multi-access edge computing in their problem models. These studies are promising guides to integrate the RoE with the function splitting approach to improve the edge computing performance in a C-RAN \cite{Garcia-Saavedra2018a,Garcia-Saavedra2018b}. Meanwhile, they provide a detailed analysis of the crosshaul approach in a separate study. This approach combines the fronthaul and backhaul networks as a joint packet-based network for further reducing the network costs \cite{Gonzalez-diaz2019}.
\par Chang et al. focus on three inter-related problems in an RoE network \cite{Chang2016}. The first problem aims to packetize the BBU processed data in Ethernet frames according to choosing the function splitting decisions. The second one deals with the scheduling of these frames, and the last one ensures a hybrid automatic repeat request based timing constraint while providing a solution for the first and second problems. In their following work, they provide a detailed performance analysis of several key indicators such as the network throughput and user satisfaction \cite{Chang2017}. Ojaghi et al. highlight the importance of network slicing to improve the throughput \cite{Ojaghi2019}. They combine it with RoE connections and target to optimize the computational cost and the throughput of the whole network. Diez et al. aim to minimize the total end-to-end latency in their packet-based network by providing a connection for each RRH \cite{Diez2019}. They compare their solution with the fixed split and fixed scheduling approaches.
\subsection{Using Renewable Energy Sources in a C-RAN architecture}
\par Alameer et al. provide a solution for a classical C-RAN architecture \cite{Alameer2016}. They represent a RES system with a queuing model, in which RESs are deployed in each RRHs and BBUs.Their objective is to minimize the overall energy consumption of this model by considering the QoS. Guo et al. focus on a similar problem, in which they represent the system as an MINLP problem \cite{Guo2018}. They propose a two-phase heuristic to reduce the brown energy consumption. 
\par Although the function splitting is a new concept, Temesgene et al. integrate the RESs in this concept and provide a detailed energy consumption analysis in this system \cite{Temesgene2019}. They propose a solution for an offline problem to reduce the on-grid energy consumption. Then, in their following study, they provide a solution for an online problem that dynamically change the splitting decisions according to the traffic load and the harvested energy \cite{Temesgene2018}. Meanwhile, Wang et al. propose a novel model to maximize the throughput of the network by solving the function splitting problem with RESs \cite{Wang2018}. On the contrary, Ko et al. choose the throughput as a constraint in their problem \cite{Ko2018}. Then they target to reduce the overall on-grid energy consumption in the network by using RESs.
\par In our previous study, we integrated a RES model with a hybrid C-RAN architecture and formulated an OpEx minimization problem by considering the network constraints \cite{Pamuklu2020}. After describing the system model, we provided cost-effective solutions for MNOs. Unlike prior work, in this paper, we integrate the RoE concept into this system. Instead of connecting the DUs to CU with direct fiber links, we implement a packet-based network between these units. Then, we solve an optimal routing problem. The results show that efficient routing decisions are mandatory to reduce the cost of MNOs. In the next section, we describe the system model with the RoE network in detail.
\begin{figure*}
\centering
\subfloat[\label{fig:ti1} Time Interval 1.]{
\includegraphics[width=.40\textwidth]{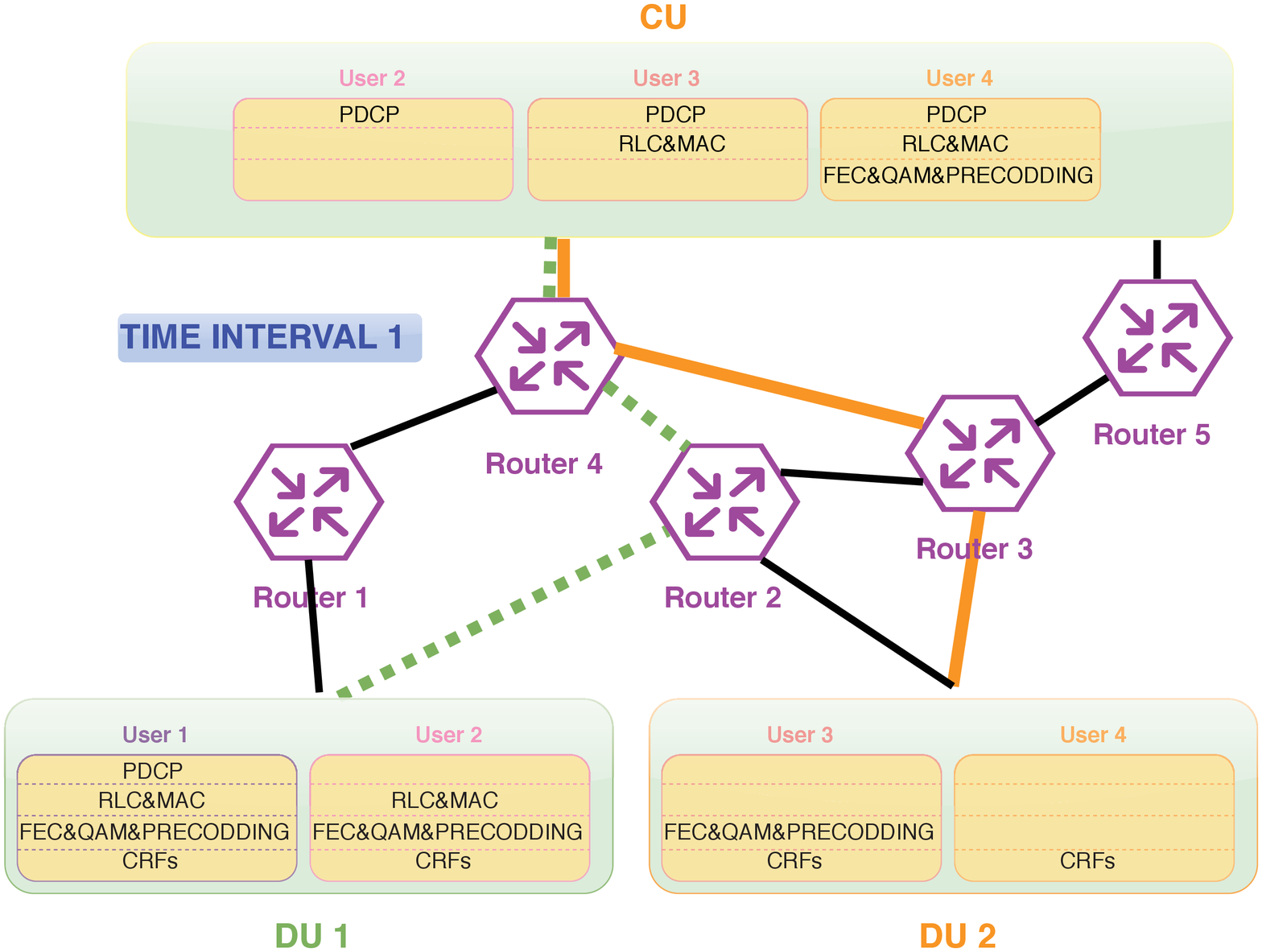}}
\subfloat[\label{fig:ti2} Time Interval 2.]{
\includegraphics[width=.40\textwidth]{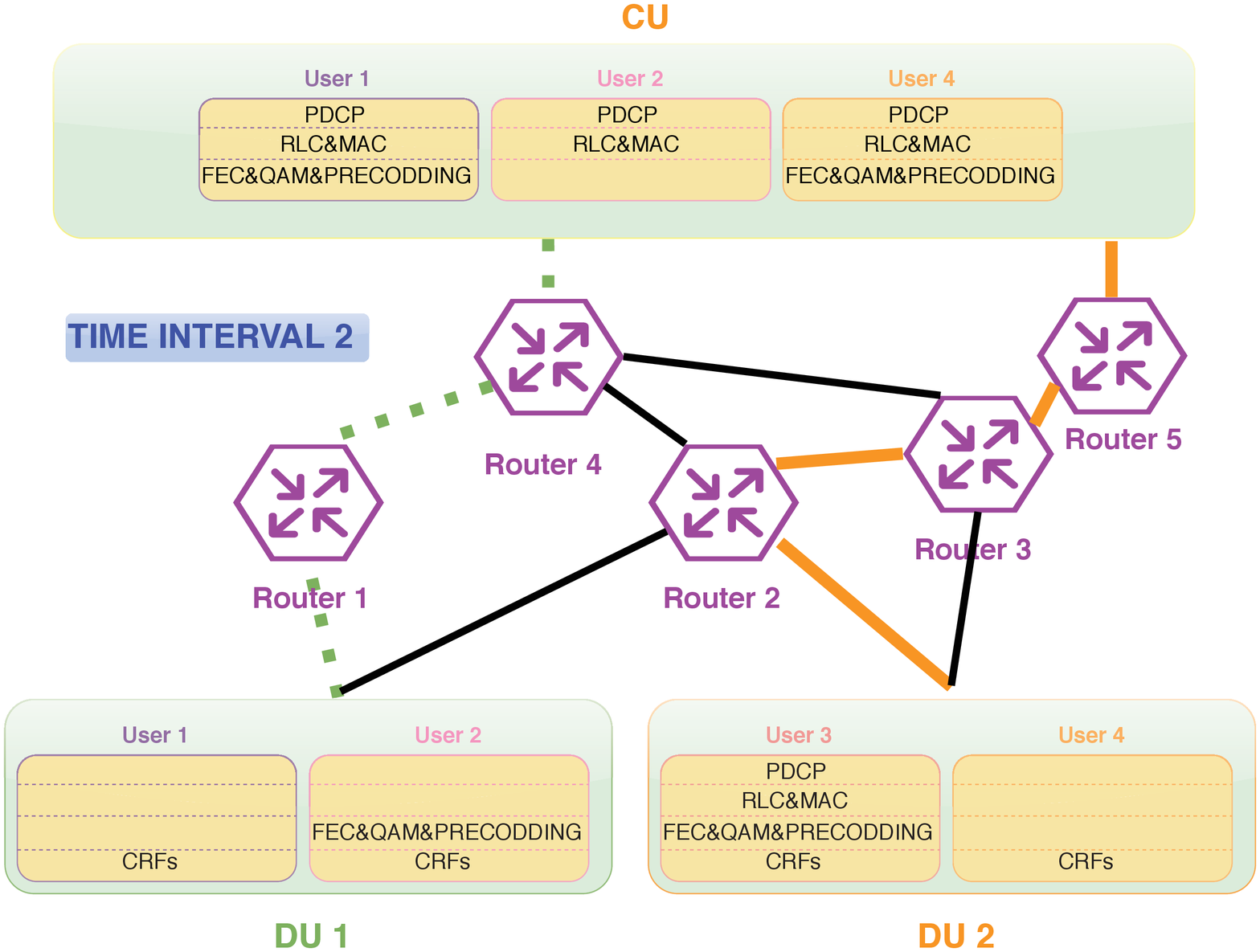}}
\caption{Dynamic functional splitting on different routing decisions. Green dashed and orange solid paths show the decisions for DU1 and DU2, respectively.}
\label{fig:splitting}
\end{figure*}
\section{GROVE System Model}
\par Figure~\ref{fig:arch} illustrates the overall architecture of the GROVE model. We consider a scenario that a set of DUs, $\mathcal{R}$, are connected to the CU through a packet-based network $G=(\mathcal{V},\mathcal{E})$, where $\mathcal{V}$ is the superset of routers, the set of DUs, and the CU; and $\mathcal{E}$ is the set of links between these components. Each link $e \in \mathcal{E}$ has a limited capacity $\Omega_{e} \geq 0$ for the downlink transmissions from CU to DUs \footnote{Although we just consider the downlink transmissions for simplicity, the network model can be straightforwardly extended to include uplink transmissions.}. This architecture reduces the CapEx of MNOs by replacing expensive point-to-point links with a mesh network. The growing number of DUs in the next-generation RANs increases the cost dramatically and leads researchers to investigate RoE approaches \cite{Communications2018}. In order to reduce the OpEx, dynamic routing decisions should be provided synchronously with the changing traffic loads which are directly related to the function splitting decisions.
\par Each DU has a Radio Frequency (RF) equipment with an antenna. Besides, there is a set of RRHs, $\mathcal{C}$, that are connected to their corresponding DU with a point-to-point millimeter-wave or dedicated fiber links \cite{Alabbasi2019}. While these RRHs do not have any BBUs, this architecture provides throughput enhancement with lower costs. Since the RRHs are geographically close to their corresponding DU, the CapEx of the MNO to connect them remains at lower values.
\par BBU functions are provided by the digital processing equipments (DPEs) in the DUs and the CU, collectively. The chain of these functions is broken at a certain split point. This point is dynamically decided for a set of time intervals in a day period for each user. These users have different data traffic loada on the DPEs, $\rho_{it}$, and the maximum delay threshold, $\mu_{it}$, according to their required service type.
\par Cell-related functions (CRFs) and user-related functions (URFs) are the two main divisions of the BBU functions \cite{Alabbasi2018}. According to the Small Cell Forum, breaking the chain after CRFs reduces the required bandwidth significantly \cite{SmallCellForum2016}. While the packet-based network between the CU and DUs should have a limited capacity to achieve a reasonable CapEx of the MNO, we prefer to process all CRFs on the DUs side. Moreover, the processing load and the bandwidth consumption of URFs vary with the user traffic demands; thus, splitting after CRFs also has multiplexing benefits due to the mesh network.  
\par We have four splitting options on the CU side that should be given by considering the routing decisions in each time interval. Figure~\ref{fig:splitting} illustrates these decisions for two different time intervals. These options are:
\begin{enumerate}
\item CU does not process any URF. This option is suitable for ultra-reliable low latency requirement traffic (URLLC), which has stringent delay requirements \cite{Larsen2019}. On the other hand, choosing this option for all traffic types literally yields a distributed RAN. For example, the enhanced mobile broadband (eMBB) energy consumption dominates the overall energy consumption in a RAN. This traffic type has high throughput demand and produces a high traffic load on DPEs \cite{Elsayed2019a}. Thus, their functions should be favored to be processed in the CU which has more energy-efficient DPEs. 
\item CU processes only packet data convergence protocol (PDCP). The functions in this protocol are not very strict about delay requirements. Besides, centralizing the PDCP functions promotes mobility across the DUs \cite{SmallCellForum2016}.
\item CU processes radio link control (RLC) and medium access control (MAC) in addition to the PDCP. Thus, it processes all URFs in the MAC layer and above. This option is more strict in terms of delay, but have several centralization advantages such as allowing traffic aggregation \cite{3GPP2017}.
\item CU processes forward error correction (FEC), quadrature amplitude modulation (QAM), and precoding processes, which means it processes all URFs including the physical layer functions. Although this split is the most energy-efficient option in this paper, in order to choose it, the delay requirement should be non-stringent. Moreover, this option needs the highest bandwidth on the mesh network, and it is not feasible to provide this option for all users and DUs concurrently.
\end{enumerate}
\par In addition to these splitting points, there are also other splitting point options that may be chosen according to MNO's requirements \cite{SmallCellForum2016}. The splitting options that provide processing of a higher amount of URFs in CU reduce the number of total active DPEs in the network due to the aggregation of these functions in the centralized DPEs. Thus,
more energy-saving will be provided by MNOs with these splitting options. However, we increase the required bandwidth in the packet-switched network and end-to-end latency with the extension in centralization \cite{Alabbasi2018}. Besides, the DUs and CU have batteries that have limited capacities. Thus, we have to dynamically change the splitting options according to the renewable energy amount in these batteries. Otherwise, MNOs lose the valuable unstored renewable energy in the unit where they do not prefer to process the functions. Thus, the network's overall energy consumption increases with inefficient function splitting decisions. Moreover, the choice of splitting options is also related to the network topology and routing decisions. For example, if the other DUs congest a network link (Figure~\ref{fig:ti1}), we have to select the splitting options that process more URFs on the DU side. Therefore, it is crucial to choose the function splitting and the routing decisions together to balance the load on each side (Figure~\ref{fig:ti2}). However, in a scenario with a higher number of DUs, it is unavoidable that many DUs should share the same links.
\begin{table}
\centering
\caption{\label{tab:Notations} Summary of the notations. }
\begin{tabular}{|c|p{6cm}| }
\hline
\textbf{Sets} & \textbf{Explanation} \\ \hline
$t\in\mathcal{T}$ & set of time intervals \\
$i\in\mathcal{I}$ & set of users \\
$c\in\mathcal{C}$ & set of RRHs \\
$d\in\mathcal{D}^{y}$ & set of DPEs ($y$ is $CU$ or $DU$)\\
$r\in\mathcal{R}$ & set of DUs \\
$f\in\mathcal{F}$ & set of URFs \\
$v\in\mathcal{V}$ & set of nodes \\
$e\in\mathcal{E}$ & set of edges, $e = (x\in\mathcal{V},y\in\mathcal{V})$ \\ \hline
\textbf{Variables}& \textbf{Explanation} \\ \hline
$m_{idft}$ & URF $f$ of UE i is hosted in DPE $d$ \\
$a_{dt}$ & DPE $d$ is active in time interval $t$ \smallskip \\ 
\makecell[cc]{$l_{rte}$\medskip} & \makecell[cl]{path decision variable indicating if \\edge $e=(x\in\mathcal{V},y\in\mathcal{V})$ is on the path \\towards DU $r$ in time interval $t$}\smallskip\\
$s^{y}_{rt}$ & green energy consumption\\
$b^{y}_{rt}$ & remaining energy in the battery\\
$p^{y}_{rt}$ & sold energy\\ \hline
\textbf{Input} & \textbf{Explanation} \\ \hline
$\rho_{it}$ & traffic load ratio of user $i$\\
$\mu_{it}$ & delay threshold of user $i$\\
$L^{y}$ & DPE function cap. ($y$ is $CU$ or $DU$)\\
$B^{y}_{r}$ & battery maximum storage capacity\\
$S^{y}_{r}$ & solar panel size\\
$G^{y}_{rt}$ & generated green energy\\
$\Psi^{y}_{rt}$ & total energy consumption\\
$\mathbb{E}_{t}$ & energy price in time interval $t$\\
$\mathbb{P}$ & sold energy penalty ratio\\
$\Omega_{e}$ & bandwidth capacity of edge (link) $e$\\
\hline
\end{tabular}
\end{table}
\par If we now move to the energy model, Figure~\ref{fig:arch} illustrates that the CU and the DUs have two different energy sources to facilitate their operations. While the first one, the solar panel, reduces the OpEx of the MNO with renewable energy, the second one, on-grid energy, becomes a reliable source in the case of the lack of insufficient green energy. The other components in the system, RRHs, have only an on-grid energy source. Since most of the 5G RRHs are expected to be indoor, the MNO can easily position them in an indoor area that is not directly exposed to solar radiation. However, if needed, RESs can also be used for RRHs.
\par The energy consumptions in the DUs ($\Psi^{DU}_{rt}$) and the CU ($\Psi^{CU}_{t}$) are formulated in Equation~\ref{eq:energyConsInDU} and Equation~\ref{eq:energyConsInCU}, respectively. $E^{DU}_{DPE}$ is the energy consumption of a DPE in a DU, and $a_{dt}$ is the binary decision that identifies the activity of DPE $d$ in time interval $t$. Besides, a DU has a static energy consumption ($E^{DU}_{STA}$) due to the RF equipment, cooling system, and the other idle energy consumptions which do not change by the DPE activity. Similarly, a CU also has a static energy consumption ($E^{CU}_{STA}$), which is similar to the DU side, except it does not include any RF equipment energy consumption.
\begin{align}
\Psi^{DU}_{rt} &= E^{DU}_{STA} + \sum\limits_{d\in\mathcal{D}_{r}}a_{dt}E^{DU}_{DPE}
\label{eq:energyConsInDU}
\\
\Psi^{CU}_{t} &= E^{CU}_{STA} + \sum\limits_{d\in\mathcal{D}_{CU}}a_{dt}E^{CU}_{DPE}
\label{eq:energyConsInCU}
\end{align}
\par The mesh network between the CU and DUs also consumes energy. However, most of this energy consumption does not change with the traffic load and remains as a static value. The only way to reduce this energy consumption is by completely switching off the routers and the links \cite{Liu2018}. Despite this, in our system model, the links are always active, and we do not need to optimize this energy consumption; thus, we do not include it in the objective function.    
\section{Minimizing the OpEx of GROVE Model}
\subsection{Problem Formulation}
\par Table~\ref{tab:Notations} outlines the notations in this section. Our primary goal is to reduce the OpEx of the GROVE system model explained in the previous section. This minimization problem can be defined as: 

\begin{align}
&\textbf{Minimize:} \notag
\\
&\sum\limits_{t\in\mathcal{T}}  \biggl [  \Psi^{CU}_{t}-s^{CU}_{t}-\mathbb{P}*p^{CU}_{t}  \notag
  \\&+  \sum\limits_{r\in\mathcal{R}}(\Psi^{DU}_{rt}-s^{DU}_{rt}-\mathbb{P}*p^{DU}_{rt}) \biggr] * \mathbb{E}_{t}
\label{eq:obj1}
\\
&\textbf{Subject to:}\footnote{Constraints should be satisfied for all time intervals ($\forall t \in \mathcal{T}$).} \notag
\\
&\sum\limits_{f\in\mathcal{F}}\sum\limits_{i\in\mathcal{I}} \rho_{it} m_{idft} < L^{CU}, \quad\forall d \in \mathcal{D}^{CU}
\label{eq:DPECapInCU}
\\
&\sum\limits_{f\in\mathcal{F}}\sum\limits_{c\in\mathcal{C}_{r}} \sum\limits_{i\in\mathcal{I}_{c}} \rho_{it}  m_{idft} < L^{DU}, \quad\forall d \in \mathcal{D}^{DU}_{r} ,\forall r\in\mathcal{R} 
\label{eq:DPECapInDU}
\\
&M * a_{dt} -\sum\limits_{f\in\mathcal{F}}\sum\limits_{i\in\mathcal{I}}m_{idft}\geq 0, \quad\forall d \in \mathcal{D}^{CU}
\label{eq:DPEActiveInCU}
\\
&M * a_{dt} - \sum\limits_{f\in\mathcal{F}}\sum\limits_{c\in\mathcal{C}_{r}} \sum\limits_{i\in\mathcal{I}_{c}}m_{idft}\geq 0, \;\;\forall d \in \mathcal{D}^{DU}_{r}, \forall r \in \mathcal{R}
\label{eq:DPEActiveInDU}
\\
&\sum\limits_{f\in\mathcal{F}}\sum\limits_{d\in\mathcal{D}^{CU}\cup\mathcal{D}^{DU}_{r}}m_{idft} = |\mathcal{F}|, \;\;\forall i\in\mathcal{I}_{c}, c\in\mathcal{C}_{r}, \forall r \in \mathcal{R}
\label{eq:FunctionAssign}
\\
&\sum\limits_{f\in\mathcal{F}} \sum\limits_{d\in\mathcal{D}^{CU}} m_{idft} < \mu_{it},\quad \forall i \in \mathcal{I}
\label{eq:DelayConst}
\\
&b^{CU}_{t} = b^{CU}_{(t-1)} - s^{CU}_{t} - p_{t} + S^{CU}G^{CU}_{t}
\label{eq:battEnergyInCU}
\\
&b^{DU}_{rt} = b^{DU}_{r(t-1)} - s^{DU}_{rt} - p^{DU}_{rt} + S^{DU}_{r}G^{DU}_{rt},\quad \forall r\in\mathcal{R}
\label{eq:battEnergyInDU}
\\
 &b^{CU}_{t} \leq B^{CU}
\label{eq:battLimitInCU}
\\
&b^{DU}_{rt} \leq B^{DU}_{r},\quad\forall r\in\mathcal{R}
\label{eq:battLimitInDU}
\\
&s^{CU}_{t} \leq  \Psi^{CU}_{t}
\label{eq:renEnMaxLimitInCU}
\\
&s^{DU}_{rt} \leq  \Psi^{DU}_{rt},\quad\forall r\in\mathcal{R}
\label{eq:renEnMaxLimitInDU}
\end{align}
\begin{multline}
\sum\limits_{y\in\mathcal{V}} l_{rt(x,y)} - \sum\limits_{y\in\mathcal{V}} l_{rt(y,x)}  =
\begin{cases} 
1, & \text{if } x\in \mathcal{V}^{DU} \\
-1,& \text{if } x\in \mathcal{V}^{CU} \\
0, & \text{otherwise,}
\end{cases}\\ \forall r \in \mathcal{R} , \forall x\in \mathcal{V}
\label{eq:routing}
\end{multline}
\begin{multline}
\label{eq:bwConstraint}
\sum\limits_{r\in\mathcal{R}} l_{rt(x,y)} \sum\limits_{c\in\mathcal{C}_{r}} \sum\limits_{i\in\mathcal{I}_{c}} \sum\limits_{d\in\mathcal{D}^{CU}} \sum\limits_{f\in\mathcal{F}} \rho_{it} m_{idft} \leq \Omega_{(x,y)},\\\forall (x,y)\in \mathcal{E}
\end{multline}
\par The OpEx of the system is the overall on-grid electricity bills of the CU and the DUs\footnote{The electricity cost of the RRHs and the maintenance cost of the GROVE system are not included in the OpEx calculation. The reason is that these costs do not change by any decision variable given in Table~\ref{tab:Notations}.}.
We have to deal with three fundamental problems to reduce these bills. First, we have to reduce the total energy consumptions in the CU ($\Psi^{CU}_{t}$) and the DUs ($\Psi^{DU}_{rt}$) by switching off as many as DPEs in these units. Second, we have to favor the usage of renewable energy ($s^{CU}_{t}$ and $s^{DU}_{rt}$) instead of on-grid energy consumption. This action principally depends on the size of solar panels and the batteries in these units and planning the use of renewable energy in these batteries in an efficient way. Otherwise, the MNO should sell this valuable energy to the grid network at a reduced price ($\mathbb{P}*p^{CU}_{t}$ and $\mathbb{P}*p^{DU}_{rt}$). Lastly, we have to consider the variation of the electricity prices in a day period ($\mathbb{E}_{t}$) and try to use renewable energy as much as possible in the time intervals that the price of the electricity is higher than the average price.
\par The first two constraints (Inequalities~\ref{eq:DPECapInCU} and~\ref{eq:DPECapInDU}) model the DPEs capacity limitations. $L^{CU}$ and $L^{DU}$ are the maximum numbers of URFs that can be executed in a DPE of a CU or a DU, respectively. As stated by Mharsi et al., a URF's processing demand correlates with the traffic load of the corresponding user; thus, the number of executed URFs in a DPE ($m_{idft}$) is multiplied with the traffic load in these constraints \cite{Mharsi2018}. Meanwhile, deciding to execute a URFs in a DPE leads up to activate that DPE ($a_{dt}$). Constraints~\ref{eq:DPEActiveInCU} and~\ref{eq:DPEActiveInDU} grant this causality for the CU and the DUs, respectively. Inequality~\ref{eq:FunctionAssign} guarantees another critical constraint: processing of all URFs ($f\in\mathcal{F}$) of each user in DPEs. Furthermore, choosing the cloud side for processing a URF depends on the delay threshold of the corresponding user ($\mu_{it}$), which is maintained by Inequality~\ref{eq:DelayConst} \footnote{In order to keep the problem complexity low, the computing costs are assumed the same for each URF. Therefore, deciding the number of URFs in one cloud side provides us a certain splitting point in the chain of the URF.}.
\par Inequalities~\ref{eq:battEnergyInCU} to~\ref{eq:renEnMaxLimitInDU} regulate renewable energy usage restrictions. The first two of them calculate the remaining energy in a battery ($b^{y}_{rt}$) according to the remaining energy from the previous time interval ($b^{y}_{r(t-1)}$), consumed green energy ($s^{y}_{rt}$), the sold energy to the grid ($p^{y}_{rt}$), and the generated renewable energy in this time interval ($S^{y}_{r}G^{y}_{rt}$). The capacity of the batteries ($B^{y}_{r}$) limits the maximum stored renewable energy, and Inequalities~\ref{eq:battLimitInCU} and \ref{eq:battLimitInDU} show this limitation for the CU and the DUs. Due to the limited battery capacity in this system, it is crucial to provide network link usage priority to the DUs, which have less remaining energy. Thus, they migrate their URFs to the CU, and the DUs that have a higher amount of renewable energy in their batteries are forced to execute URFs in their own DPEs. Therefore, we can increase the amount of renewable energy usage in the overall system. Lastly, it is clear that the consumed green energy ($s^{y}_{rt}$) can be as high as the total energy consumption in the CU and in the DUs ($\Psi^{y}_{rt}$), which are granted by Inequalities~\ref{eq:renEnMaxLimitInCU} and \ref{eq:renEnMaxLimitInDU}.
\par Equation~\ref{eq:routing} represents the flow conservation equations among the DUs and the CU. $(x,y)\in\mathcal{E}$ represents an edge that connects the nodes $x\in\mathcal{V}$ and $y\in\mathcal{V}$. If the node is a DU node $x\in\mathcal{V}^{DU}$, there should be an edge that connects this node to another node in the network. On the other hand, if the node is a CU node $x\in\mathcal{V}^{CU}$, there should be an edge that connects another node in the network to this node. For the packet-switching nodes between the DUs and CU, the summation of the number of incoming and outcoming edges should be equal to zero, which means that the path could not be disconnected at a switch node. Although this equation does not restrict a cycle in the path, a cycle does not have any beneficial effect on the objective function and the constraints. Moreover, it increases the bandwidth usage, and by adding the bandwidth constraint (Inequality~\ref{eq:bwConstraint}) to each edge, we discourage a cycle in the network.
\par The edges in this model have a limited bandwidth capacity $\Omega_{(x,y)}$ which depends on two values: the number of URFs that execute in the DPEs of the CU ($d\in D^{CU}$), and the traffic loads of each user ($\rho_{it}$). This relation is provided by Inequality~\ref{eq:bwConstraint}. However, this inequality is a quadratic constraint as a result of the multiplication of the path decision variable $l_{rt(x,y)}$, and the URF usage decision variable $m_{idft}$. Thus, we need to linearize this constraint to solve this mathematical model with a MILP solver, as shown in the next subsection.
\subsection{Problem Linearization}
\begin{linTheo}
Let $z_{rt(x,y)}$ be a continuous decision variable and can take on any value between $[0,\Omega_{(x,y)}]$. If $m_{idft}$, $l_{rt(x,y)}$, and $z_{rt(x,y)}$ decision variables satisfy Inequalities~\ref{eq:bwConstraintLinearized1} and \ref{eq:bwConstraintLinearized2}, then the quadratic constraint Inequality~\ref{eq:bwConstraint} is also satisfied by the decision variables $m_{idft}$ and $l_{rt(x,y)}$.
\begin{align}
&\sum\limits_{c\in\mathcal{C}_{r}} \sum\limits_{i\in\mathcal{I}_{c}} \sum\limits_{d\in\mathcal{D}^{CU}} \sum\limits_{f\in\mathcal{F}} \rho_{it} m_{idft} \leq M * (1-l_{rt(x,y)}) \notag \\
&+ z_{rt(x,y)}, \quad\quad\quad\quad\quad\quad\quad\quad\quad\forall r\in \mathcal{R}, \forall (x,y)\in \mathcal{E}
\label{eq:bwConstraintLinearized1}
\\
&\sum\limits_{r\in\mathcal{R}} z_{rt(x,y)} \leq \Omega_{(x,y)}, \quad\quad\quad\quad\quad\quad\quad\quad\forall (x,y)\in \mathcal{E}
\label{eq:bwConstraintLinearized2}
\end{align}
\end{linTheo}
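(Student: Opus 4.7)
The plan is to prove the proposition by a straightforward case analysis on the binary routing variable $l_{rt(x,y)}$, showing that in either case the auxiliary variable $z_{rt(x,y)}$ serves as a valid upper bound for the quantity $l_{rt(x,y)}\sum_{c,i,d,f}\rho_{it}m_{idft}$ appearing in the original quadratic Inequality~\ref{eq:bwConstraint}. Once this per-$r$ bound is established, summing over $r\in\mathcal{R}$ and invoking Inequality~\ref{eq:bwConstraintLinearized2} closes the argument.

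First I would fix an arbitrary edge $(x,y)\in\mathcal{E}$, an arbitrary time interval $t$, and an arbitrary DU $r\in\mathcal{R}$, and split on the value of $l_{rt(x,y)}\in\{0,1\}$. If $l_{rt(x,y)}=1$, then Inequality~\ref{eq:bwConstraintLinearized1} collapses to $\sum_{c\in\mathcal{C}_r}\sum_{i\in\mathcal{I}_c}\sum_{d\in\mathcal{D}^{CU}}\sum_{f\in\mathcal{F}}\rho_{it}m_{idft}\leq z_{rt(x,y)}$, so the corresponding term in the quadratic sum equals the left-hand side and is bounded above by $z_{rt(x,y)}$. If $l_{rt(x,y)}=0$, the corresponding term in the quadratic sum is identically zero and is trivially bounded by $z_{rt(x,y)}\geq 0$. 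In either case we obtain
\begin{equation*}
l_{rt(x,y)}\sum_{c\in\mathcal{C}_r}\sum_{i\in\mathcal{I}_c}\sum_{d\in\mathcal{D}^{CU}}\sum_{f\in\mathcal{F}}\rho_{it}m_{idft}\leq z_{rt(x,y)}.
\end{equation*}
Summing this inequality over all $r\in\mathcal{R}$ and chaining with Inequality~\ref{eq:bwConstraintLinearized2} yields $\sum_r l_{rt(x,y)}\sum_{c,i,d,f}\rho_{it}m_{idft}\leq \sum_r z_{rt(x,y)}\leq \Omega_{(x,y)}$, which is exactly the quadratic Inequality~\ref{eq:bwConstraint}.

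The only subtlety, and what I expect to be the main obstacle, is justifying that the big-$M$ constant used in Inequality~\ref{eq:bwConstraintLinearized1} is large enough that the constraint is inactive whenever $l_{rt(x,y)}=0$. A safe choice is any $M$ at least as large as the maximum possible value of $\sum_{c,i,d,f}\rho_{it}m_{idft}$, which is bounded by the DPE capacity constraint in Inequality~\ref{eq:DPECapInCU} summed over the active CU DPEs; concretely one may take $M=L^{CU}|\mathcal{D}^{CU}||\mathcal{F}|$ or any looser upper bound. With such an $M$, the case $l_{rt(x,y)}=0$ is handled without restricting $z_{rt(x,y)}$, so the auxiliary variable retains the flexibility needed to not waste bandwidth on inactive routes. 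I would also briefly note that the reverse direction (that any feasible $(m,l)$ for the quadratic constraint admits some feasible $z$ in the linearized system) follows by setting $z_{rt(x,y)}:=l_{rt(x,y)}\sum_{c,i,d,f}\rho_{it}m_{idft}$, which ensures tightness of the relaxation and confirms that no feasible solution of the original model is lost.
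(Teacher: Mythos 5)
Your proof is correct and follows essentially the same route as the paper: a case split on $l_{rt(x,y)}\in\{0,1\}$, using $z_{rt(x,y)}\geq 0$ when the route is inactive and the collapsed form of Inequality~\ref{eq:bwConstraintLinearized1} when it is active, then summing over $r$ and chaining with Inequality~\ref{eq:bwConstraintLinearized2}. Your explicit per-$r$ bound, the sizing of $M$, and the remark on the converse direction make the argument somewhat more rigorous than the paper's informal version, but the underlying idea is identical.
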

\begin{proof}
When $l_{rt(x,y)}=0$, the first term of the right-hand side (RHS) in Inequality~\ref{eq:bwConstraintLinearized1} becomes very large due to the big $M$ value. In that case, the decision variable $z_{rt(x,y)}$ can be chosen any value between  $[0,\Omega_{(x,y)}]$ to achieve inequality. Besides, it should be noticed that a MILP solver is strongly motivated to select the lowest possible value for this variable by the sake of Inequality~\ref{eq:bwConstraintLinearized2}. 
\par For the case $l_{rt(x,y)}=1$, the first term of RHS in Inequality~\ref{eq:bwConstraintLinearized1} becomes zero. In that case, the decision value of $z_{rt(x,y)}$ should be large or equal to the total traffic of DU $r$ on edge $(x,y)$ \footnote{It is essential to remind that by Constraint~\ref{eq:routing}, there is only one path between a DU and the CU.}. Meanwhile, Inequality~\ref{eq:bwConstraintLinearized2} restricts the summation of all $z_{rt(x,y)}$ from different DUs with the bandwidth capacity ($\Omega_{x,y}$) of the corresponding edge $(x,y)$. Therefore, bandwidth capacity limitation of each edge $(x,y)$  is ensured by these two inequalities.
\end{proof}
\subsection{Complexity Analysis}
\begin{linTheo}
Minimizing the OpEx of GROVE model is an NP-Hard problem.
\end{linTheo}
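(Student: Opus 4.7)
The plan is to establish NP-hardness via a polynomial-time reduction from the classical Bin Packing problem, whose decision version is NP-complete. The motivation is that the OpEx objective, through the term $a_{dt}E^{CU}_{DPE}$, is monotone in the number of active DPEs, while Inequalities~\ref{eq:DPECapInCU}--\ref{eq:DPEActiveInCU} model the packing of variable-sized URF loads into fixed-capacity DPEs. My strategy is to neutralize the renewable-energy, routing, bandwidth, and delay machinery by a careful choice of inputs, so that the residual optimization is exactly Bin Packing.

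First I would take a Bin Packing instance with item sizes $s_1,\ldots,s_n$, bin capacity $B$, and target $k$ bins, and construct the GROVE instance in polynomial time as follows: a single time interval $|\mathcal{T}|=1$, a single URF per user so that $|\mathcal{F}|=1$, $n$ users each served by a dedicated RRH attached to a common DU with loads $\rho_i = s_i$, a CU hosting $n$ candidate DPEs of capacity $L^{CU}=B$, one DPE at the DU of effectively unbounded capacity, and a two-node packet network consisting of a single edge whose bandwidth $\Omega_{(x,y)}$ exceeds $\sum_i s_i$. All delay thresholds $\mu_{it}$ would be set large enough to permit CU-side execution of the URF; the solar-panel sizes, battery capacities, generated green energy, and the penalty $\mathbb{P}$ would all be zero, and I would normalize $\mathbb{E}_{t}=1$.

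Next I would verify the equivalence. Under this construction, Constraint~\ref{eq:FunctionAssign} forces each user's single URF to be hosted in exactly one DPE, Inequality~\ref{eq:DPECapInCU} bounds the load of each CU-side DPE by $B$, and Inequalities~\ref{eq:DPEActiveInCU}--\ref{eq:DPEActiveInDU} make $a_{dt}$ a faithful indicator of DPE activation. With the renewable-energy variables forced to zero by the input choice, the objective collapses to a constant plus $E^{CU}_{DPE}\sum_{d\in\mathcal{D}^{CU}} a_{dt}$, so a feasible GROVE solution of OpEx at most $\text{const}+k\cdot E^{CU}_{DPE}$ exists if and only if the items can be packed into $k$ bins of capacity $B$. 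Since Bin Packing is NP-hard and the reduction is polynomial, the proposition follows.

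The hard part will be making sure that Inequalities~\ref{eq:routing}--\ref{eq:bwConstraintLinearized2} leak no additional slack that could trivialize the reduction: in the one-edge network, flow conservation pins $l_{rt(x,y)}=1$ on the unique edge, and the oversized $\Omega_{(x,y)}$ together with the auxiliary variable $z_{rt(x,y)}$ from Proposition~1 renders the bandwidth inequality non-binding regardless of the chosen split. If a more network-flavored hardness argument is preferred, an alternative plan is to reduce instead from Edge-Disjoint Paths by making bandwidth the bottleneck and the DPE capacities non-restrictive; either direction yields the claim.
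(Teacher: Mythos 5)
Your overall strategy is the same as the paper's: strip out the routing/bandwidth and renewable-energy machinery by the choice of inputs and observe that the residual function-splitting problem is Bin Packing (the paper does this as a one-line sketch, removing Inequalities~\ref{eq:routing}--\ref{eq:bwConstraint} and \ref{eq:battEnergyInCU}--\ref{eq:renEnMaxLimitInDU} and citing Alabbasi et al.\ for the bin-packing core, whereas you attempt an explicit, self-contained reduction, which in principle is the stronger write-up).

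However, your construction as stated has a genuine gap in the ``only if'' direction. You give the DU a DPE of \emph{effectively unbounded capacity} but never constrain its activation cost $E^{DU}_{DPE}$, and you claim the objective ``collapses to a constant plus $E^{CU}_{DPE}\sum_{d\in\mathcal{D}^{CU}}a_{dt}$.'' That is not true: $\Psi^{DU}_{rt}$ in Equation~\ref{eq:energyConsInDU} contributes $a_{dt}E^{DU}_{DPE}$ for the DU-side DPE, and nothing in Constraints~\ref{eq:FunctionAssign} or \ref{eq:DelayConst} forces any URF to the CU (the delay constraint only caps CU-side placement). Hence every instance, packable or not, admits the feasible solution that hosts all $n$ URFs on the single unbounded DU DPE at cost $E^{CU}_{STA}+E^{DU}_{STA}+E^{DU}_{DPE}$; if $E^{DU}_{DPE}\leq k\,E^{CU}_{DPE}$ (e.g.\ the paper's own parameters have $E^{DU}_{DPE}=E^{CU}_{DPE}$), the threshold $\mathrm{const}+k\,E^{CU}_{DPE}$ is met even when the items do not fit into $k$ bins, so the claimed equivalence fails. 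The fix is easy but must be stated: either make DU-side processing infeasible (choose $L^{DU}$ smaller than the smallest item size, recalling that Inequality~\ref{eq:DPECapInDU} is strict, and analogously set $L^{CU}=B+\varepsilon$ to absorb the strict inequality in~\ref{eq:DPECapInCU}) or make it uncompetitive (set $E^{DU}_{DPE}>n\,E^{CU}_{DPE}$), after which your equivalence and the NP-hardness conclusion go through.
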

\begin{proof}
(Sketch) For the static routing case, we can remove Inequalities~\ref{eq:routing} and~\ref{eq:bwConstraint}. Further, if we choose the size of solar panels and the batteries as zero,  Inequalities~\ref{eq:battEnergyInCU} to \ref{eq:renEnMaxLimitInDU} can be eliminated. Alabbasi et al. emphasize that this new reduced problem involves the bin packing problem \cite{Alabbasi2018}; thus, it is an NP-Hard problem.
\end{proof}
\begin{figure}[!bp]
\centering
\includegraphics[width=0.45\textwidth]{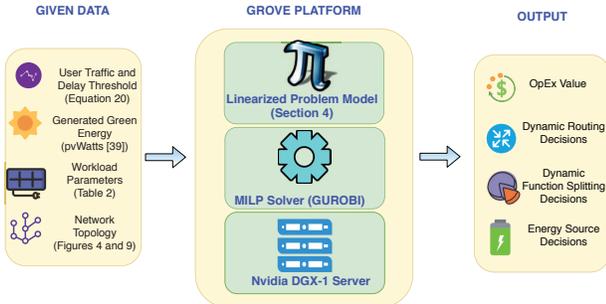}
\caption{\label{fig:expsetup} Experimental setup for evaluating the GROVE model.}
\end{figure}
\section{Computational Experiments}
\begin{figure}
\centering
\includegraphics[width=0.45\textwidth]{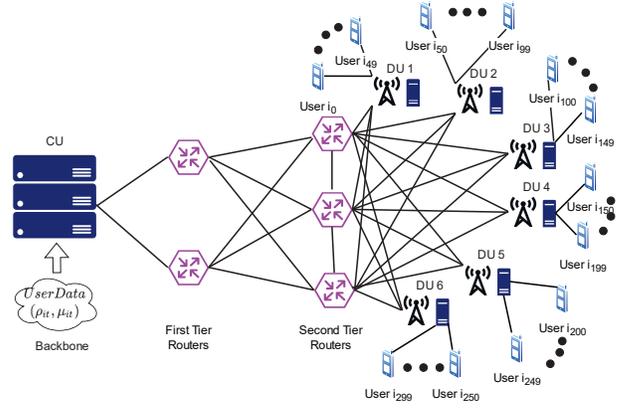}
\caption{\label{fig:top} The network topology among the central unit (CU) and the distributed units (DU) used in the experiments.}
\end{figure}  
\par This section aims to analyze the impact of jointly deciding on function splitting, routing, and RES decisions for the OpEx minimization problem of the GROVE Model.
\subsection{Evaluation Settings}
\label{sec:results_settings}
\par Figure~\ref{fig:expsetup} illustrates the experimental setup that we use to find solutions for the NP-Hard problem analyzed in the previous section. We used Gurobi \cite{GurobiOptimizationLLC2020} as a MILP solver, and computational experiments were run on an Nvidia DGX-1 Station \cite{Nvidia2020} with a Dual 20-Core Intel Xeon E5-2698 v4 2.2 GHz. The termination time was chosen as 4 hours. An example network topology used in our experiments is shown in Figure~\ref{fig:top}. It contains 12 nodes (one CU, six DUs,  and five switch nodes) and 28 edges ( ($|\mathcal{V}|,|\mathcal{E}|)=(12,28)$). First, user downlink data traffic flows from the CU to any of the routers in the first tier. Then, the data flow to any of the routers in the second tier, which are connected to the neighbor ones. In the next step, the data flows to the DU, which serves to the corresponding user. Finally, the data reaches the user either directly from the DU in an umbrella cell or from the RRH in a small cell. 
\par The number of RRHs connected to a DU equals five ($\mid\mathcal{C}_{r}\mid = 5$), and there are ten users in each RRH small cell ($\mid\mathcal{I}_{c}\mid = 10$). The delay threshold of the required service from each user ($\mu_{it}$) is randomly generated in $[0, \mid\mathcal{F}\mid]$. These users have a daily sinusoidal shape traffic load created by Equation~\ref{eq:trafficCreator} in which $\varphi$ is a random value between the $3\pi/4$ and $7\pi/4$ which defines the peak hour of the traffic profile, $\nu=3$ determines the slope of the traffic profile and $n(t)$ is a random value which produces a fluctuation in this traffic profile. In addition, we generate different peak hours for each DU to affect distinctive zones in a city such as residential, industrial, or shopping areas \cite{Pamuklu2018}. Thus, we simulate both temporal and spatial variations of a traffic load in the region of a city. Lastly, we multiply the calculated traffic load by $[0.5, 1, 1.5]$ to analyze the model for three traffic densities, which are called \textit{Low}, \textit{Medium}, and \textit{High}.
\begin{equation}
\label{eq:trafficCreator}
\lambda_{it} = \frac{1}{2^{\nu}}[1+\sin(\pi t/12 + \varphi )]^{\nu} + n(t), i\in\mathcal{I}
\end{equation}
\par Generated green energy from a solar panel ($G^{y}_{rt}$) is calculated by the pvWatts application \cite{NationalRenewableEnergyLaboratory}. We use the solar radiation data of four different cities (Stockholm, Istanbul, Cairo, Jakarta) that have a distinct distribution in a year period. Thus, we can investigate the effect of seasonal change in our model. Besides, energy prediction models may also be included easily in our system model \cite{Deruyck2018}. The rest of the parameters used in our simulations are given in Table~\ref{tab:Parameters}. The electricity price values are from Republic of Turkey Energy Market Regulatory Authorities (EPDK) variable electricity tariff regulation that has different price policies according to the time of the day \cite{EPDK2018}. The exchange rate is chosen as $1\;USD = 7\;TRY$.
\begin{table}
\centering
\caption{\label{tab:Parameters} Experiment parameters}
\begin{tabular}{|c|c|c|c|}
\hline
\textbf{Instance} & \textbf{Unit} & \textbf{CU Side} & \textbf{DU Side} \\ \hline
$E_{STA}$ & Wh & 1000 &500 \\
$E_{DPE}$ & Wh & 400 &400 \\
$S$ & kWh & 80 &20 \\
$B$ & kWh & 50 &20 \\
$\mathbb{E}_{t}$ & TRY & [0.29, 0.46, 0.70] &[0.29, 0.46, 0.70] \\
$\mathbb{P}$& - &0.5 &0.5 \\
\hline
\end{tabular}
\end{table}

\begin{figure}
\centering
\includegraphics[width=0.45\textwidth]{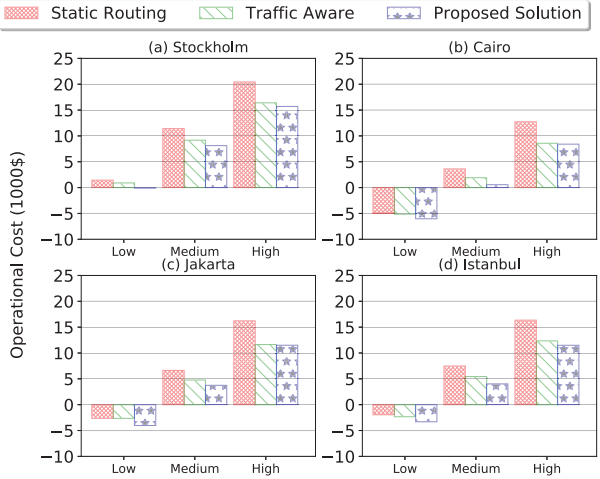}
\caption{\label{fig:bc} Operational cost of methods in different traffic loads and solar radiation distributions. (Negative values indicate that the excess renewable energy is sold back to the smart grid).}
\end{figure}
\subsection{Performance of OpEx Minimization}
\label{sec:results_opex}
\par As mentioned in the introduction section, one of the contributions of this paper is to show that we have to jointly consider the function splitting, renewable energy usage, and routing decisions. Thus, we compare the results of our proposed solution with the results of two models: a model that jointly consider the function splitting and renewable energy but does not take into account the routing decisions, called "Static Routing," and a model that take into account the function splitting and the routing decisions together but ignore the renewable energy usage decisions, called "Traffic-Aware."
\par Figure~\ref{fig:bc} shows the one year period OpEx results for different traffic loads and solar radiation distributions. The results confirm that our proposed solution has lower OpEx for any traffic load and solar radiation distribution. Thus, we can use this model for any city or urban region around the world. Meanwhile, Figure~\ref{fig:bc} also shows that with higher traffic loads, the OpEx increases due to the boosting of the number of active DPEs\footnote{The reason for some negative OpEx in the low traffic load is that the system's profit gain of selling renewable energy is higher than the grid energy bills in this level of traffic load for this particular setup corresponding to a specific CapEx.}. Also, the cities that have higher solar radiation rates have lower OpEx owing to the increase in the renewable energy availability.
\begin{figure}
\centering
\includegraphics[width=0.45\textwidth]{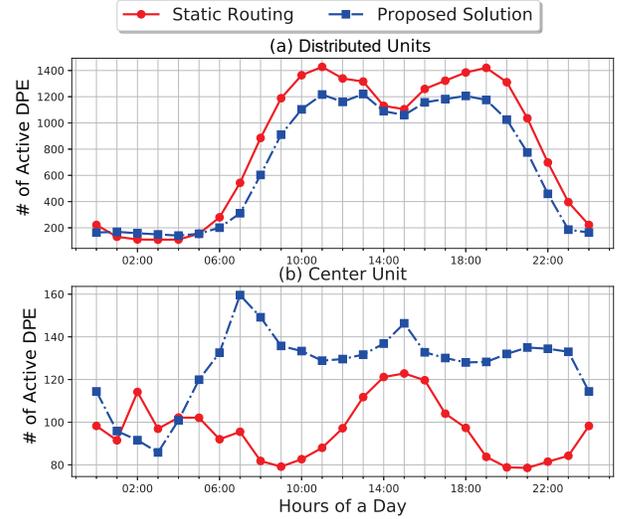}
\caption{\label{fig:du} Number of active DPEs in a day period in CU and DUs (average of 12 different scenarios).}
\end{figure}
\begin{figure}
\centering
\includegraphics[width=0.45\textwidth]{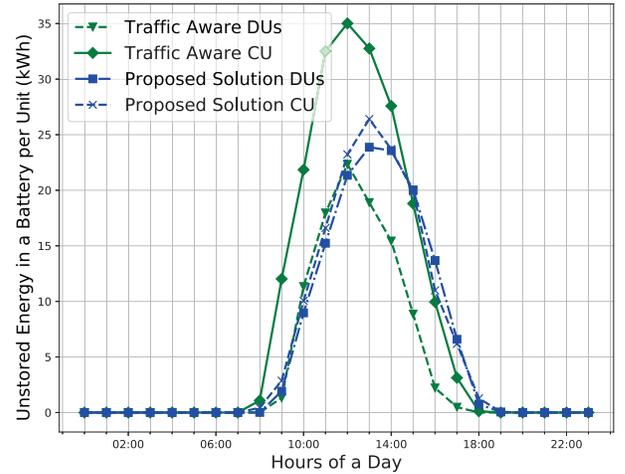}
\caption{\label{fig:unstored} Unstored energy in DUs and CU in a day period.}
\end{figure}
\par If we analyze why our proposed solution provides better results than the static routing, we have to focus on the number of active DPEs in the DUs and CU. As we can see in Figure~\ref{fig:du}, our proposed solution decreases the number of active DPEs in the DUs by choosing the routing decisions efficiently. Thus we can reduce the overall energy consumption by centralizing the functions in the CU. In the meantime, our proposed solution also beats the "Traffic-Aware" algorithm by activating the DPEs of DUs which have renewable energy in their batteries. Thus we can use this valuable renewable energy efficiently and prevent the unstored energy as shown in Figure~\ref{fig:unstored}. Moreover, our solution reserves the renewable energy of the DUs for more profitable hours (Figure~\ref{fig:rem}). Therefore, we can reduce the OpEx further by considering the renewable energy in the batteries of the DUs and the CU.
\begin{figure}
\centering
\includegraphics[width=0.45\textwidth]{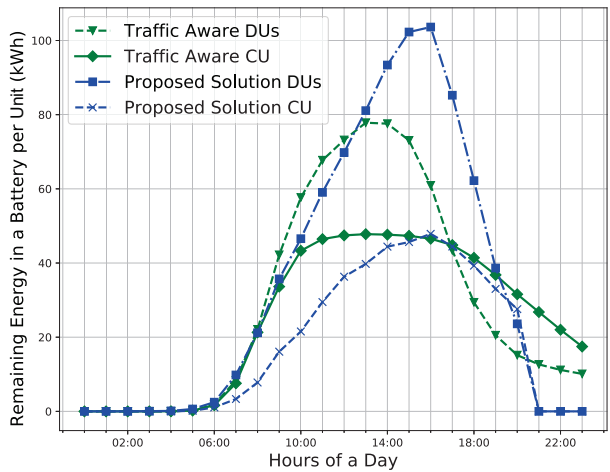}
\caption{\label{fig:rem} Remaining energy in the CU and DUs in a day period.}
\end{figure}
\begin{figure*}
\centering
\subfloat[\label{fig:tpc1} 6 DU topology.]{
\includegraphics[width=0.33\textwidth]{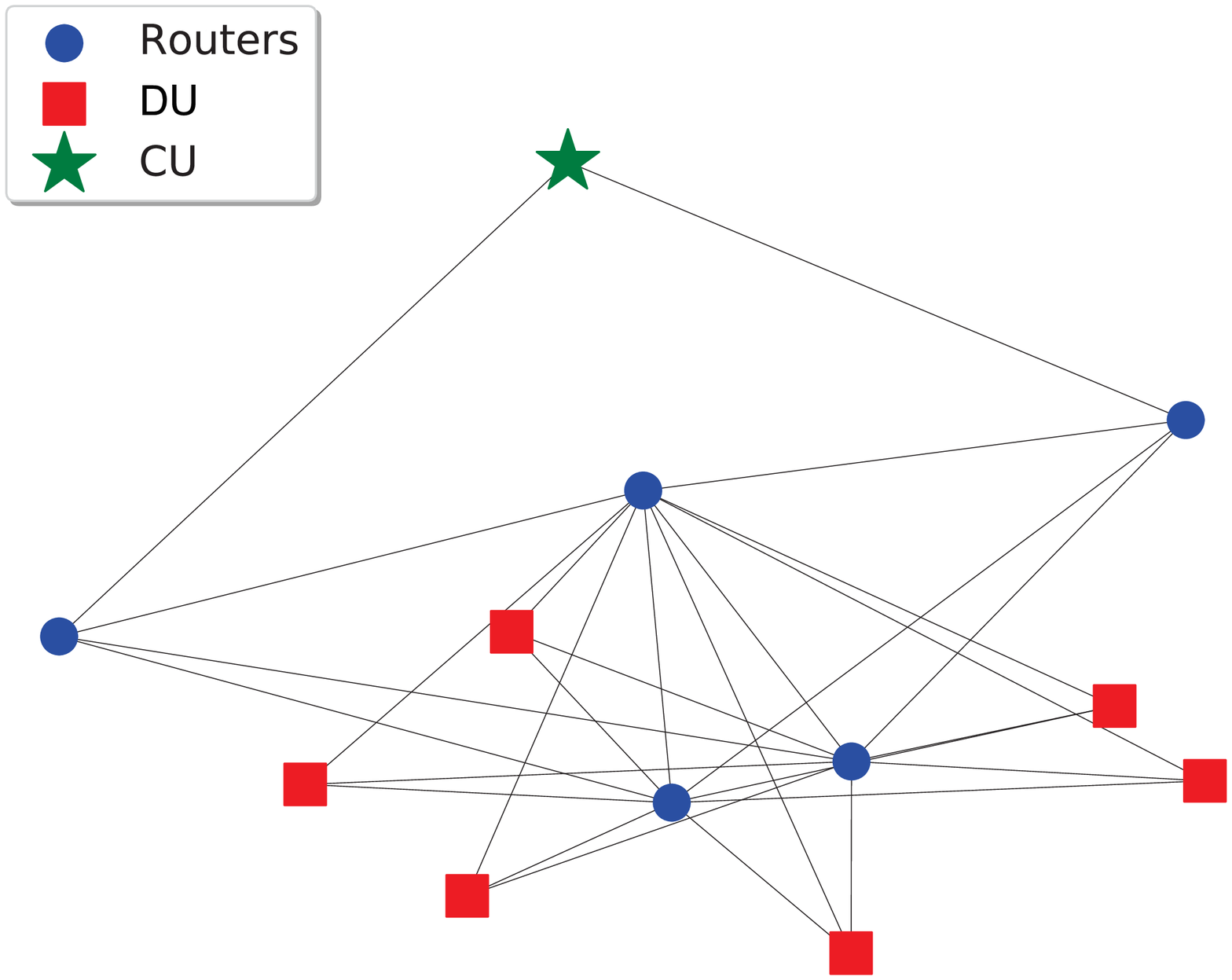}}
\subfloat[\label{fig:tpc2} 12 DU topology.]{
\includegraphics[width=0.33\textwidth]{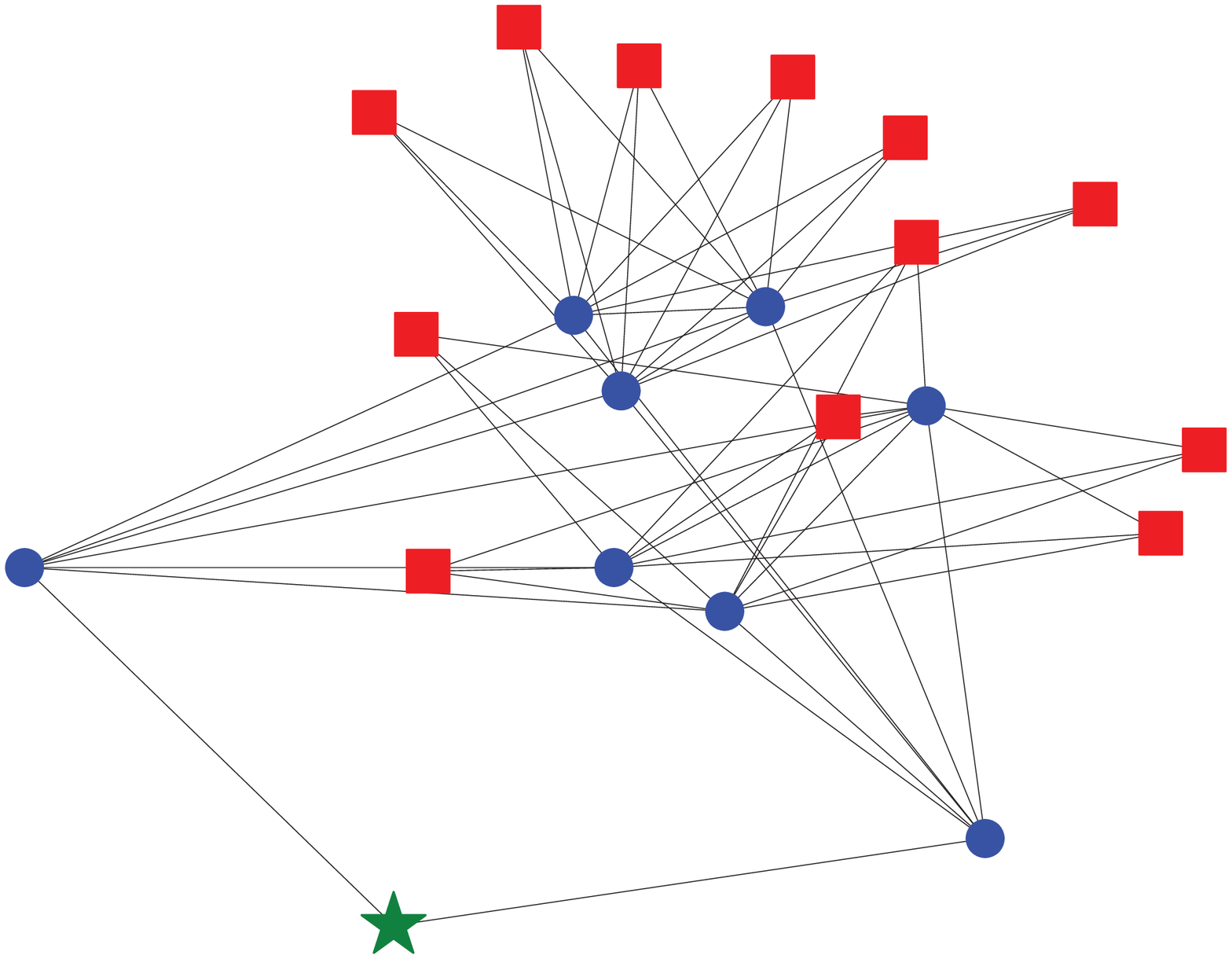}}
\subfloat[\label{fig:tpc4}24 DU topology.]{
\includegraphics[width=0.33\textwidth]{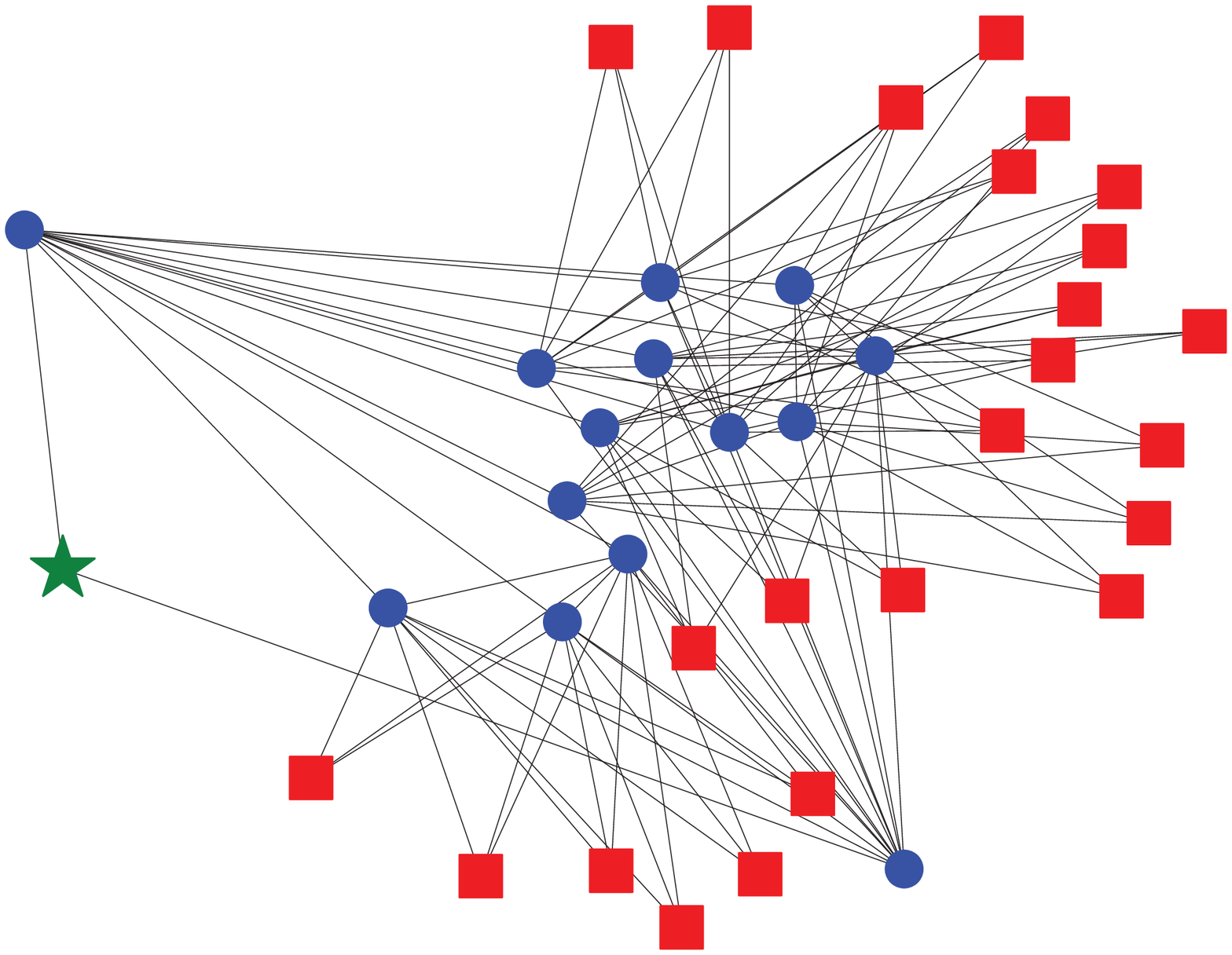}}
\caption{Network scalability study: Network topologies with different number of nodes and edges.}
\label{fig:networkScale}
\end{figure*}
\begin{figure}
\centering
\includegraphics[width=0.45\textwidth]{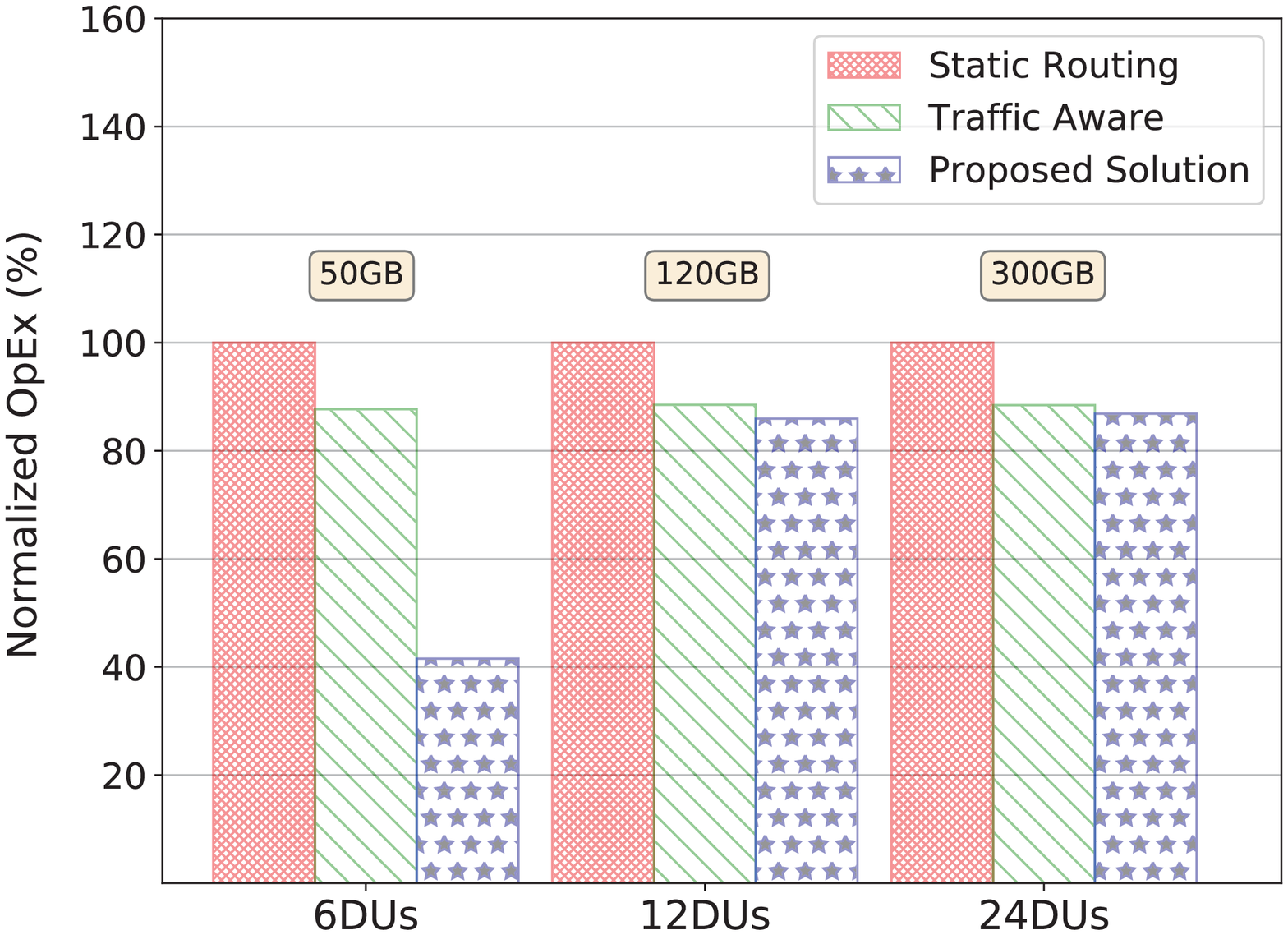}
\caption{\label{fig:nscomp} OpEx and RAM consumption of the MILP solver for different network topologies (Istanbul, medium traffic rate)}
\end{figure}
\subsection{Network Scalability Analysis}
\par The example network configuration we studied was described in Section~\ref{sec:results_settings}, and its results were presented in the previous subsection. Now, we will analyze the outcomes with larger topologies. Figure~\ref{fig:networkScale} shows  three network topologies that have different sizes. Figure~\ref{fig:tpc1} is our initial topology, which has six DUs and five switch nodes connecting them to the CU. In Figure~\ref{fig:tpc2}, we doubled the number of DUs (12 DUs), and then we used eight switches to connect them to the center. Lastly, Figure~\ref{fig:tpc4} shows an architecture that has 24 DUs and  14 switches.
\par  Due to the non-polynomial difficulty of the optimization problem, as the network size gets larger, the required memory (As indicated in Figure~\ref{fig:nscomp}) and computational power are increasing very rapidly. Thus, the size of the third topology is a boundary for the GROVE Model by using the DGX-1 Station for the given run time limit. Moreover, the increase in the network size reduces the performance of our proposed solution. The main reason for this outcome is related to the size of the solution space of the compared solutions. Our proposed solution has a more extensive solution space; thus, a MILP solver needs more processing time to approximate the lower bound. Besides, an MNO gets better results with our proposed solution for larger network sizes, according to the results shown in Figure~\ref{fig:nscomp}.
\section{Conclusion}
\par We propose a novel network model named GROVE, which connects the CU and DUs on a mesh network, as a cost-efficient solution for using RESs in a C-RAN architecture. Then, we formulate an OpEx minimization problem that jointly takes into account the function splitting, routing, and RES usage decisions. Coexistence of these decisions at the same time interval results in a quadratic programming problem. We linearize the quadratic constraints to solve this problem with a MILP Solver. The results show that our model improves the disjoint approaches' performance, and it is more feasible for different solar radiation distributions and various traffic densities. The proposed solution outperforms the static routing by using the multiplexing gain on the mesh network. Also, it performs better than the traffic-aware method by optimizing the usage of profitable renewable energy in the batteries of the CU and DUs.
\par The network scalability analysis shows that a MILP solver can maintain a network model with 40 nodes with reasonable RAM consumption. Besides, our proposed solution performs better results even with larger networks, and we may increase the termination time to reduce the gap between the lower bound of the MILP solver to improve this performance. For even larger problems, we plan to develop a heuristic using lower CPU processing and memory resources. We will investigate the economic benefits of constructing an RoE network as an alternative to a standard network with dedicated links. Hence, we are planning to encourage MNOs to implement the proposed RESs system to their next-generation networks to improve their cost and energy efficiency.
\bibliography{grove}
\end{document}